\newtheorem{theorem}{Theorem}[section]
\newtheorem{corollary}[theorem]{Corollary}
\newtheorem{proposition}[theorem]{Proposition}
\theoremstyle{definition}
\newtheorem{remark}{Remark}[section]
\newtheorem{definition}[theorem]{Definition}
\numberwithin{equation}{section}
\newcommand{\cT}{\mathcal{T}}
\newcommand{\cR}{\mathcal{R}}
\newcommand{\onev}{\mathbf{1}}
\newcommand{\R}{\mathbb{R}}
\newcommand{\E}{\mathbb{E}}
\newcommand{\bE}{\mathbb{E}}
\newcommand{\pP}{\mathbb{P}}
\newcommand{\bP}{\mathbb{P}}
\newcommand{\diag}{\operatorname{diag}}
\newcommand{\pred}{\operatorname{pred}}
\newcommand{\exponential}{\operatorname{Exponential}}
\DeclareMathOperator*{\argmin}{arg\,min}
\begin{document}
%
% paper title
% Titles are generally capitalized except for words such as a, an, and, as,
% at, but, by, for, in, nor, of, on, or, the, to and up, which are usually
% not capitalized unless they are the first or last word of the title.
% Linebreaks \\ can be used within to get better formatting as desired.
% Do not put math or special symbols in the title.
\title{Numerical Investigation of Metrics for Epidemic Processes on Graphs}

% author names and affiliations
% use a multiple column layout for up to three different
% affiliations
\author{\IEEEauthorblockN{Max Goering, Nathan Albin\\ and Pietro Poggi-Corradini}
\IEEEauthorblockA{Department of Mathematics\\
Kansas State University\\
Manhattan, KS 66506\\
Email: mlgoering@gmail.com, \\ \{albin,pietro\}@math.ksu.edu}
\and
\IEEEauthorblockN{Caterina Scoglio \\ and Faryad Darabi Sahneh}
\IEEEauthorblockA{Department of Electrical and\\Computer Engineering\\
Kansas State University\\
Manhattan, KS 66506\\
Email: \{caterina,faryad\}@ksu.edu}
\thanks{This material is based upon work supported by the National Science Foundation under Grant No. DMS-1201427 and No. DMS-1515810.}}

% conference papers do not typically use \thanks and this command
% is locked out in conference mode. If really needed, such as for
% the acknowledgment of grants, issue a \IEEEoverridecommandlockouts
% after \documentclass

% for over three affiliations, or if they all won't fit within the width
% of the page, use this alternative format:
% 
%\author{\IEEEauthorblockN{Michael Shell\IEEEauthorrefmark{1},
%Homer Simpson\IEEEauthorrefmark{2},
%James Kirk\IEEEauthorrefmark{3}, 
%Montgomery Scott\IEEEauthorrefmark{3} and
%Eldon Tyrell\IEEEauthorrefmark{4}}
%\IEEEauthorblockA{\IEEEauthorrefmark{1}School of Electrical and Computer Engineering\\
%Georgia Institute of Technology,
%Atlanta, Georgia 30332--0250\\ Email: see http://www.michaelshell.org/contact.html}
%\IEEEauthorblockA{\IEEEauthorrefmark{2}Twentieth Century Fox, Springfield, USA\\
%Email: homer@thesimpsons.com}
%\IEEEauthorblockA{\IEEEauthorrefmark{3}Starfleet Academy, San Francisco, California 96678-2391\\
%Telephone: (800) 555--1212, Fax: (888) 555--1212}
%\IEEEauthorblockA{\IEEEauthorrefmark{4}Tyrell Inc., 123 Replicant Street, Los Angeles, California 90210--4321}}

% use for special paper notices
\IEEEspecialpapernotice{(In Proceedings of 2015 Asilomar Conference on Signals, Systems, and Computers)}

% make the title area
\maketitle

% As a general rule, do not put math, special symbols or citations
% in the abstract
\begin{abstract}
This study develops the epidemic hitting time (EHT) metric on graphs measuring the expected time an epidemic  starting at node $a$ in a fully susceptible network takes to propagate and reach node $b$. An associated EHT centrality measure is then compared to degree, betweenness, spectral, and effective resistance centrality measures through exhaustive numerical simulations on several real-world network data-sets. We find two surprising observations: first, EHT centrality is highly correlated with effective resistance centrality; second, the EHT centrality measure is much more delocalized compared to degree and spectral centrality, highlighting the role of peripheral nodes in epidemic spreading on graphs.
\end{abstract}

% no keywords

% For peer review papers, you can put extra information on the cover
% page as needed:
% \ifCLASSOPTIONpeerreview
% \begin{center} \bfseries EDICS Category: 3-BBND \end{center}
% \fi
%
% For peerreview papers, this IEEEtran command inserts a page break and
% creates the second title. It will be ignored for other modes.
\IEEEpeerreviewmaketitle

\section{Introduction}
Dynamics on graphs has long been a central research topic across many applied disciplines. Several graph related quantities have proven successful in studying different applications. In particular, the effective resistance metric appears to be an important tool for studying a variety of dynamics over graphs, including, but not limited to, random walks on graphs, electrical networks, Markov chains, and averaging networks \cite{Boyd2008Siam}. It comes as no surprise that effective resistance is important for all these dynamic processes because they are gradient driven processes. The effective resistance is closely related to the Laplacian matrix of the underlying graph. However, epidemic spreading dynamics is a {\it branching process} and behaves very differently from gradient driven dynamics. 

In this paper, we seek graph quantities that help describe epidemic dynamics. Centralities are frequently used to determine properties of the underlying topology of a network. In fact, comparing different centralities on the same network can be used to classify the network structure \cite{ronqui2014analyzing}. Herein, we compare common centralities as well as graph metrics to see how to best understand epidemic dynamics. We use many of the same real-world data sets as in \cite{ronqui2014analyzing} and conclude that surprisingly, regardless of the underlying network structure, numerics indicate that the effective resistance is the most relevant graph quantity to the epidemic spreading. A partial explanation is offered at the end of the article.

\section{Epidemic Hitting Time Metric for Graphs}

\subsection{The SI model}\label{ssec:si}
The SI Epidemic Model is a model where every interaction between an infected and susceptible node can lead the susceptible node to become infected at a rate $\beta$ called the \emph{infection rate}. This means, that if two people, say Alice and Bob interact, and at time $t$ Alice is infected while Bob is susceptible, then the probability that Alice infects Bob in the time interval $(t, t+h]$ equals $\beta h + o(h)$.
Further, in the SI epidemic model, we assume that infections occur independently and once someone becomes infected they remain infected forever. In particular, by independence, the probability that two separate infections occur during a time interval $(t,t+h]$ is $o(h)$.

In realistic models, these interactions are described by links of a contact network $G = (V,E)$.
In order to keep track of the infection, we introduce the state vector $\omega_t$, where $\omega_{t}(i)$ is the state at time $t$ of the $i$--th node in the network. If $\omega_{t}(i) = 0$ we say node $i$ is susceptible and if $\omega_{t}(i) = 1$ we say node $i$ is infected. In order to make the scenario with Alice and Bob rigorous, we let $N = |V|$ and $A = [A(i,j)]_{1 \le i,j \le N}$ be the adjacency matrix representing the network $G$. Then $A(i,j)= 1$ if node $i$ can be infected by node $j$ and zero otherwise.  
Finally, let $I_{t}=\{i: \omega_t(i)=1\}$ be the set of nodes that are infected by time $t$. Since at most one infection occurs during $(t,t+h]$, we can condition on the possible infection occurrences and obtain:

%\begin{align*} 
%\pP\left[ \ \omega_{t+h}(i) = 1 \mid \omega_{t}(i) = 0, \omega_{t}\  \right] 
%&= \sum_{\substack{ j \in I_{t} \\ j \sim i }}\left(\beta h + o(h) \right) \\
%& = \sum_{j \sim i} \omega_{t}(j)  \beta h + o(h)  \\
%& = \beta h  \sum_{j =1}^{N} A(i,j) \omega_{t}(j) + o(h) , \\
%\end{align*}

\begin{equation}\label{eq:markov}
\pP\left[ \ \omega_{t+h}(i) = 1 \mid \omega_{t}(i) = 0, I_{t}\  \right] 
 = \beta h  \sum_{j =1}^{N} A(i,j) \omega_{t}(j) + o(h) 
\end{equation}

Next we consider the process $|I_{t}|$ which counts the total number of infected nodes in the network at time $t$. 
We find the transition probabilities for $|I_{t}|$ by summing over all susceptible nodes $S_{t}=\{i:\omega_t(i)=0\}$.
In other words, writing $|I_{t}|=\omega_t\cdot \onev$ where $\onev=[1\cdots 1]^T$ is a vector of ones, we have that
for $h>0$ small, 
\begin{equation*}
\pP \left[\ |I_{t+h}| - |I_{t}| = 1 \mid I_{t}  \ \right] = \beta h \left( \omega_t^T A \onev -  \omega_t^T A \omega_t\right)+o(h)
\end{equation*}

Note that $A\onev=d$ where $d(j)$ is the degree of node $j$.
Write $D=\diag(d)$ for the diagonal matrix of the node degrees. Then,
\[
\omega_t^T A \onev = \omega_t^T d = \omega_t^T D \onev = \omega_t^T D \omega_t,
\]
where the last equality follows  since $\sum_{i,j}\omega_t(i)D(i,j)=\sum_{i}d(i)\omega_t(i)=\sum_{i}d(i)\omega_t(i)^2=\sum_{i,j}\omega_t(i)D(i,j)\omega_t(j)$.

So letting $L=D-A$ be the combinatorial Laplacian, we get that
\[
 \omega_t^T A \onev-\omega_t^T A \omega_t= \omega_t^T D \omega_t-\omega_t^T A \omega_t=\omega_t^T L \omega_t
\]
Then
\begin{equation}\label{eq:quadform}
\pP \left[\ |I_{t+h}| - |I_{t}| = 1 \mid I_{t}  \ \right] = \beta h \omega_t^T L \omega_t + o(h).
\end{equation}
We learned to use the Laplacian in this equation from \cite{vanmieghem}.

The  set $V$ splits into two subsets $S_t$ and $I_t$ and $\omega_t$ is the indicator function of the set $I_t$. This partition defines a subset of edges called the edge-boundary, $\partial I_t$, consisting of all the edges that connect a node in $I_t$ to a node in $S_t$. With this in mind,
the quadratic form 
\[
\omega_t^T L \omega_t = \sum_{e\in E} \left[(\nabla \omega_t)(e)\right]^2 =  \sum_{e\in \partial I_t} 1 = |\partial I_t|
\]
counts the number of edges in $\partial I_t$.

So equation (\ref{eq:quadform}) becomes
\begin{equation} \label{iaprob}
 \pP \left[\ |I_{t+h}| - |I_{t}| = 1 \mid I_{t}  \ \right] = \beta h|\partial I_t|+o(h).
\end{equation}

 Now we determine the probability that a susceptible node $i$ will be the next node infected after time $t$, given $\omega_{t}$ and given that an infection occurs. Using the definition of conditional probability,
\begin{align*}
&\pP \left[ \ \omega_{t+h}(i) - \omega_{t}(i) = 1 \mid I_{t},  |I_{t+h}| - |I_{t}| = 1 \ \right]  \\
&=\frac{\pP\left[\  \omega_{t+h}(i) - \omega_{t}(i) = 1,  |I_{t+h}| - |I_{t}| = 1 \mid I_{t} \  \right]}{\pP \left[\ |I_{t+h}| - |I_{t}| = 1 \mid I_{t}  \ \right]}  \\ 
&=\frac{ \beta h (A\omega_t)(i) + o(h) }{\beta h|\partial I_t|+o(h)} =\frac{ (A\omega_t)(i) + o(1) }{|\partial I_t|+o(1)} 
\end{align*}
where the last line follows from (\ref{eq:markov}) and (\ref{iaprob}).

Letting $h \downarrow 0$ yields, we see that
$I_t$ evolves by choosing an active edge in $\partial I_t$ uniformly at random, and then infecting the susceptible endpoint of the chosen edge.

 The {\it arrival times} $Y_{0}, Y_{1}, \dots $ of the SI epidemic $\omega_{t}$ are defined by $Y_{0} = 0$ and
\begin{equation*}
Y_{k} = \inf \{ t\geq 0 : |I_t| = k \} \quad \text{for } k = 1,2, \dots
\end{equation*}

The {\it interarrival times} $T_{1}, T_{2}, \dots$ are the times between successive arrivals,
\begin{equation*}
T_{k} = Y_{k} - Y_{k-1} \quad \text{for } k = 1,2, \dots
\end{equation*}
Therefore, given the set of infected nodes $I_{Y_k}$, the next  arrival time  $T_{k+1}$ satisfies
$$
T_{k+1} \sim \exponential( \beta | \partial I_{Y_{k}} |).
$$
More precisely,
\[
\pP\left[\ T_{k+1}\leq t \ \mid\  I_{Y_k}=I \ \right]=1-e^{-\beta|\partial I| t}.
\]
This gives rise to an event-based algorithm that picks an active edge uniformly and updates the clock exponentially based on the size of the active set.

\subsection{Epidemic Hitting Time and Variable-Lengths Models}

The SI model describes a basic epidemic process which can be thought of as a Markov chain on the set of subsets of $V$. It is natural to introduce the notion of epidemic hitting time. 
\begin{definition}
Given two nodes $a\neq b$, start an SI epidemic $\omega_t$ at $a$. Define the hitting time
$T_{a,b}=\inf\{t\geq 0: b\in I_t\}$.
Then, the {\it epidemic hitting time} from $a$ to $b$ is the expected hitting time
\[
\tau_{a,b}=\E(T_{a,b}).
\]
\end{definition}

The variable-lengths model  for the network $G = (V,E)$ consists of assigning i.i.d. lengths $X_{e} \sim \exponential( \beta)$ for every $e\in E$. We then let $d(a,b)$ be the shortest-path distance with respect to these random edge-lengths. 

\begin{theorem}\label{thm:variable-length}
For a contact network $G= (V,E)$ with variable i.i.d. edge-lengths $X_{e} \sim \exponential(\beta)$ for each $e \in E$, the continuous time process $Z_t= \{ v \in V : d(a,v) \le t \}$  is a Markov process that evolves like an SI epidemic.
\end{theorem}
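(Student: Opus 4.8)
The plan is to show that $Z_t$ has exactly the law of the SI epidemic $I_t$ started at $a$, as characterized in Section~\ref{ssec:si}: namely $Z_0=\{a\}$, the set grows by one node at a time, and, given the current set $Z$, the waiting time to the next addition is $\exponential(\beta|\partial Z|)$ while the newly added node is the susceptible endpoint of a boundary edge drawn uniformly from $\partial Z$. Since a process on the subsets of $V$ governed by these transition rules is a Markov process whose law coincides with that of the SI epidemic, it suffices to verify these three properties for $Z_t$. The first is immediate: $d(a,a)=0$ and, because the $X_e$ are continuous and strictly positive a.s., $d(a,v)>0$ for $v\neq a$ and the distances $\{d(a,v)\}_{v\in V}$ are almost surely distinct, so $Z_t$ increases by exactly one node at each arrival time $Y_k=\inf\{t:|Z_t|=k\}$.

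The heart of the argument is a memorylessness computation. First I would record the Dijkstra-type observation: if $v$ is the node added at time $Y_{k+1}$, then a shortest path from $a$ to $v$ reaches $v$ through a node $u$ with $d(a,u)<d(a,v)=Y_{k+1}$, so $u$ already belongs to $Z:=Z_{Y_k}$ and the final edge $e=(u,v)$ lies in $\partial Z$; conversely every node added next arrives across a boundary edge. Writing $R_e:=d(a,u)+X_e-Y_k$ for the residual length of a boundary edge $e=(u,v)$ with $u\in Z$, this gives $Y_{k+1}-Y_k=\min_{e\in\partial Z}R_e$ and identifies the next node with the susceptible endpoint of the minimizing edge. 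The claim I would then prove is that, conditionally on the history of $Z$ up to time $Y_k$, the residuals $\{R_e\}_{e\in\partial Z}$ are i.i.d.\ $\exponential(\beta)$; granting this, $\min_{e}R_e\sim\exponential(\beta|\partial Z|)$ and its argmin is uniform on $\partial Z$, which is precisely the SI transition rule, and the conditional law depends on the past only through $Z$, yielding the Markov property.

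To establish the claim I would decompose the conditioning event according to which edges it constrains. The key structural fact is that a shortest path to any reached node uses only reached nodes, since every node on such a path has distance at most $Y_k$; hence it uses only ``internal'' edges with both endpoints in $Z$, and the arrival times $Y_1<\dots<Y_k$ together with the identities of the infected nodes are measurable functions of the internal edge lengths alone. On the other hand, the event $\{v\notin Z_{Y_k}\}$ for an unreached neighbor $v$ of $Z$ is equivalent to $d(a,u)+X_e>Y_k$, i.e.\ $X_e>Y_k-d(a,u)$, for every boundary edge $e=(u,v)$, and it imposes no constraint on edges with both endpoints outside $Z$. Thus the full history factors as an event on the internal edges together with a family of one-sided threshold events $\{X_e>Y_k-d(a,u)\}$ on the boundary edges. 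Since the $X_e$ are independent across edges, conditioning leaves the boundary lengths independent of the internal configuration, and the memoryless property of the exponential gives $R_e=X_e-(Y_k-d(a,u))\sim\exponential(\beta)$ independently for each $e\in\partial Z$, proving the claim. A short induction on $k$, matching initial conditions and one-step transitions, then shows that $Z_t$ and $I_t$ have the same law.

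I expect the main obstacle to be making the decomposition of the conditioning event fully rigorous: one must argue carefully that conditioning on the entire trajectory $(Z_s)_{s\le Y_k}$ reveals the boundary edge lengths \emph{only} through the threshold events $\{X_e>Y_k-d(a,u)\}$ and nothing finer, so that memorylessness applies and the residuals emerge as fresh independent exponentials. The remaining measure-theoretic points --- the a.s.\ distinctness of the distances and of the minimizing residual, and the handling of the conditioning on the continuous arrival times via the appropriate densities or a limiting argument --- are routine by comparison.
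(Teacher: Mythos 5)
Your proposal is correct, but there is essentially nothing in the paper to compare it against: the paper gives no proof of Theorem \ref{thm:variable-length} at all, instead declaring it folklore in the remark that follows and pointing to Section 1.2.1 of \cite{qle}. Your argument supplies the standard proof underlying that folklore citation, and it targets exactly the right characterization of the SI dynamics, namely the event-based description derived at the end of Section \ref{ssec:si} (holding time $\exponential(\beta|\partial I|)$, new infection across a uniformly chosen boundary edge). The chain of reductions --- the next node is reached across a boundary edge of the current cluster, the residuals $R_e = d(a,u)+X_e-Y_k$, the factorization of the history event into an internal-edge event intersected with the one-sided thresholds $\{X_e > Y_k - d(a,u)\}$, then independence across edges plus memorylessness of the exponential --- is sound and yields precisely the SI transition rule. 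Two small points to tighten in a write-up: (i) your intermediate assertion that the arrival times and identities of the reached nodes are measurable with respect to the internal edge lengths alone is not literally true (that the reached set is exactly $Z$ also requires the boundary thresholds to hold); your subsequent factorization is the correct statement and is all that the memorylessness step needs, so this is a phrasing issue rather than a gap; (ii) you implicitly use that edges with both endpoints outside $Z$ neither bring any node inside by time $Y_k$ nor affect $Y_{k+1}$ --- this follows because any path exiting $Z$ must first cross a boundary edge, so every node outside $Z$ has distance at least $\min_{e\in\partial Z}\bigl(d(a,u_e)+X_e\bigr)$; making this explicit is worthwhile, since it is also the fact that identifies $Y_{k+1}-Y_k$ with $\min_{e\in\partial Z} R_e$ and hence the holding time with $\exponential(\beta|\partial Z|)$.
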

\begin{remark}
Theorem \ref{thm:variable-length}  is folklore in some circle (We thank Brent Werness for pointing out section 1.2.1 of \cite{qle} where this remark is made). 
\end{remark}

\subsection{Properties of Epidemic Hitting Time}

\begin{theorem}\label{thm:ehtmetric}
The epidemic hitting time is a metric (it is non-degenerate, symmetric and  satisfies the triangle inequality).
\end{theorem}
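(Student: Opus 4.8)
The plan is to reduce the entire statement to the corresponding pointwise properties of the random shortest-path distance furnished by the variable-lengths model. By Theorem~\ref{thm:variable-length}, an SI epidemic started at $a$ can be realized as the first-passage process $Z_t=\{v:d(a,v)\le t\}$, where $d$ is the shortest-path distance associated to the i.i.d. edge-lengths $X_e\sim\exponential(\beta)$. Under this coupling the infected set is exactly $I_t=Z_t$, so the hitting time satisfies $T_{a,b}=\inf\{t\ge 0: b\in I_t\}=\inf\{t:d(a,b)\le t\}=d(a,b)$ as a random variable. Hence $\tau_{a,b}=\E(T_{a,b})=\E\big(d(a,b)\big)$, and the problem becomes showing that $(a,b)\mapsto\E(d(a,b))$ is a metric. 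The crucial feature I will exploit is that a \emph{single} realization of the family $\{X_e\}_{e\in E}$ defines all the distances $d(\cdot,\cdot)$ simultaneously, coupling every hitting time on one probability space.

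First I would treat symmetry and the triangle inequality, which become almost immediate once the coupling is in place. For the undirected contact network, each realization of $\{X_e\}$ makes $d$ an honest metric on $V$; in particular $d(a,b)=d(b,a)$ and $d(a,c)\le d(a,b)+d(b,c)$ hold pointwise, i.e. for every sample of the edge-lengths. Taking expectations and using linearity together with monotonicity of $\E$ gives $\tau_{a,b}=\E(d(a,b))=\E(d(b,a))=\tau_{b,a}$ and $\tau_{a,c}=\E(d(a,c))\le\E(d(a,b))+\E(d(b,c))=\tau_{a,b}+\tau_{b,c}$. No independence or distributional computation is needed here beyond the fact that all three distances live on the same sample space.

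Next I would establish non-degeneracy and finiteness. Since $\tau_{a,a}=0$ is clear, it suffices to show $0<\tau_{a,b}<\infty$ for $a\neq b$ in the same connected component. Finiteness follows by fixing any $a$--$b$ path $P$ and bounding $d(a,b)\le\sum_{e\in P}X_e$, whose expectation is $|P|/\beta<\infty$. Strict positivity follows because every $a$--$b$ path uses at least one edge, and each $X_e>0$ almost surely (an $\exponential(\beta)$ variable is positive with probability one), so $d(a,b)>0$ almost surely and therefore $\E(d(a,b))>0$.

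The main obstacle, and the only genuinely delicate point, is the identification $T_{a,b}=d(a,b)$ together with the attendant claim that the same random metric $d$ can be used simultaneously for all three vertices $a,b,c$. This is exactly what Theorem~\ref{thm:variable-length} provides: it replaces three a priori different epidemic processes (started at $a$, at $b$, and at $c$) by one first-passage percolation on a common set of edge-lengths. Once one is comfortable that the coupling realizes each $\tau$ as the expectation of the corresponding coordinate of a fixed random metric, the metric axioms transfer verbatim by the elementary properties of expectation.
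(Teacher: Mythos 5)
Your proof is correct and takes essentially the same approach as the paper: both reduce the claim to Theorem~\ref{thm:variable-length}, identify $\tau_{a,b}=\E\left[d(a,b)\right]$, and invoke the fact that the expectation of a random metric (with all pairwise distances coupled on one probability space) is again a metric. The paper states this in a single line, and your writeup simply makes explicit the pointwise-to-expectation transfer plus the positivity and finiteness checks that the paper leaves implicit.
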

\begin{proof}
Theorem \ref{thm:ehtmetric} follows directly from Theorem \ref{thm:variable-length}, since $\tau_{a,b}=\bE\left[d(a,b)\right]$ and the expected value of a metric is still a metric.
\end{proof}
We say that a graph $G_{1} = (V_{1}, E_{1})$ is a refinement of $G = (V,E)$ if $V \subset V_{1}$ and $E \subset E_{1}$. 
\begin{proposition} \label{refinement}
If $G_{1}$ is a refinement of $G$, then for all $a,b \in V$, the epidemic hitting time $\tau_{a,b}(G) \ge \tau_{a,b}(G_{1})$. 
\end{proposition}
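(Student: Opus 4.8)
The plan is to reduce the claim, which concerns the branching SI dynamics, to a single pointwise comparison of first-passage distances via the variable-lengths representation of Theorem~\ref{thm:variable-length}. Recall that the paper's notion of refinement is graph inclusion, $V\subset V_1$ and $E\subset E_1$, so $G_1$ is obtained from $G$ by adjoining vertices and edges while \emph{retaining every original edge}, and the variable-lengths model for $G_1$ assigns an independent $\exponential(\beta)$ length to each of its edges. By Theorem~\ref{thm:variable-length} (as already used in the proof of Theorem~\ref{thm:ehtmetric}), the epidemic hitting time equals the expected shortest-path distance,
\[
\tau_{a,b}(G)=\E\!\left[d_G(a,b)\right],\qquad \tau_{a,b}(G_1)=\E\!\left[d_{G_1}(a,b)\right],
\]
where $d_G$ is induced by the i.i.d. lengths $X_e\sim\exponential(\beta)$ on the edges, and similarly for $G_1$. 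Since $a,b\in V\subset V_1$, both distances are well defined, and it suffices to compare the two expectations.

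First I would realize both length-models on one probability space. Draw a single family $\{X_e\}_{e\in E_1}$ of i.i.d. $\exponential(\beta)$ variables indexed by the larger edge set, define $d_{G_1}$ from the full family, and define $d_G$ from the restriction $\{X_e\}_{e\in E}$. A subfamily of i.i.d. variables is again i.i.d. with the same marginal law, so each distance retains its intended distribution and neither expectation is altered; the coupling merely places $d_G$ and $d_{G_1}$ on a common sample space so that a realization-by-realization comparison becomes meaningful.

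Next I would run the deterministic argument for each fixed realization of $\{X_e\}_{e\in E_1}$. The decisive consequence of $E\subset E_1$ is that every $a$--$b$ path $P$ in $G$ is also an $a$--$b$ path in $G_1$ and, because $P$ reuses exactly the same edge variables, has the \emph{identical} length $\sum_{e\in P}X_e$. Hence the set of attainable $a$--$b$ path-lengths in $G$ is contained in that for $G_1$, and passing to infima gives $d_{G_1}(a,b)\le d_G(a,b)$ for every realization; the adjoined vertices of $V_1\setminus V$ and edges of $E_1\setminus E$ can only open up additional, possibly shorter, routes and can never lengthen an existing one, which is precisely why the inequality points toward $G_1$. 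Taking expectations yields $\tau_{a,b}(G_1)\le\tau_{a,b}(G)$, as claimed. (If $a,b$ lie in different components of $G$ then $d_G(a,b)=+\infty$ and the bound is trivial, so one may assume the relevant path-length sets are nonempty.)

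I expect the main obstacle to be conceptual rather than computational: the edge-lengths of $G$ and $G_1$ are \emph{re-randomized}, so a bare ``more edges means smaller distance'' assertion is not self-justifying, and the real work is in exhibiting the restriction coupling that makes the comparison genuinely pointwise on one space. A secondary point to confirm is that the correspondence between the two families of $a$--$b$ paths is length-preserving on the nose, which holds here only because refinement keeps the original edges intact, so that no original path must be rerouted through the adjoined structure.
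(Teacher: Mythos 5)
Your proposal is correct and takes essentially the same route as the paper: both reduce the claim via Theorem~\ref{thm:variable-length} to the monotonicity of the random shortest-path distance under edge and vertex addition, and conclude $\tau_{a,b}(G_1)\le\tau_{a,b}(G)$. The only difference is one of exposition: the paper's one-line proof leaves the restriction coupling implicit, whereas you spell it out (drawing i.i.d.\ lengths on $E_1$ and restricting to $E$), which is a worthwhile clarification but not a different argument.
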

\begin{proof}
By Theorem \ref{thm:variable-length} adding more edges can only shorten $d(a,b)$, and adding new vertices cannot lengthen the shortest walk between $a$ and $b$. So $Z_t(G)\subset Z_t(G_1)$.
\end{proof}

\begin{proposition} \label{prop:Tree} Assume $\beta=1$.
If $\cT$ is a tree, then the epidemic hitting time is equal to the graph metric $d_0$ (shortest-path metric using number of hops). 
Furthermore, for an arbitrary graph $G$, the epidemic hitting time is bounded above by $d_0$.
\end{proposition}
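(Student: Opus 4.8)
The plan is to reduce both statements to the variable-lengths model via Theorem \ref{thm:variable-length}, which identifies $\tau_{a,b}=\bE[d(a,b)]$, where $d(a,b)$ is the shortest-path distance with respect to i.i.d.\ edge-lengths $X_e \sim \exponential(1)$ (since $\beta=1$). The single quantitative fact I will lean on is that $\bE[X_e]=1$ for every edge, so that by linearity of expectation the expected length of \emph{any fixed} path equals its number of hops.

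For the tree case, I would first observe that between $a$ and $b$ there is a \emph{unique} path $P$ in $\cT$, say with $d_0(a,b)$ edges. Because this path is unique, the shortest-path distance involves no genuine minimization: for every realization of the edge-lengths one has $d(a,b)=\sum_{e\in P}X_e$. Taking expectations and using linearity gives $\tau_{a,b}=\sum_{e\in P}\bE[X_e]=d_0(a,b)$, which is the first assertion.

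For an arbitrary graph $G$, I would fix a hop-geodesic $P^{*}$ from $a$ to $b$ realizing $d_0(a,b)$, chosen independently of the random lengths. Since $P^{*}$ is one available path, the random shortest-path distance satisfies $d(a,b)\le \sum_{e\in P^{*}}X_e$ pathwise, and taking expectations yields $\tau_{a,b}=\bE[d(a,b)]\le \sum_{e\in P^{*}}\bE[X_e]=d_0(a,b)$. Equivalently, one may note that $P^{*}$ is itself a tree (a path graph) and that $G$ is a refinement of $P^{*}$, so Proposition \ref{refinement} together with the tree case gives $\tau_{a,b}(G)\le \tau_{a,b}(P^{*})=d_0(a,b)$.

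I do not anticipate a serious obstacle: the heart of the argument is linearity of expectation combined with the observation that the shortest-path minimum either collapses to a single term (trees) or can only decrease the length (general graphs), so no convexity or Jensen-type estimate is needed. The one point requiring care is the passage from the SI description to the variable-lengths representation, but that is precisely the content of Theorem \ref{thm:variable-length}, which I am entitled to invoke.
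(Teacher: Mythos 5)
Your proposal is correct and follows essentially the same route as the paper: both reduce to Theorem \ref{thm:variable-length}, handle the tree case by noting the unique simple path and applying linearity of expectation, and obtain the general upper bound from Proposition \ref{refinement} (your direct pathwise estimate $d(a,b)\le\sum_{e\in P^{*}}X_e$ is just an inlined version of that proposition's argument). No gaps; the extra care you note about positive edge-lengths making the unique simple path the shortest walk in a tree is exactly the right observation.
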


\begin{proof}
This also follows from Theorem \ref{thm:variable-length}. On a tree there is a unique simple path from $a$ to $b$. So the expectation of the distance $d(a,b)$ can be computed using linearity. That is, if $\gamma_{a,b}$ denotes the unique simple path starting at $a$ and ending at $b$, then
$$
\E(d(a,b)) =  \sum_{e \in \gamma_{a,b}} \E(X_e) =  |\gamma_{a,b}|= d_0(a,b).
$$ 
The `Furthermore' statement follows from Proposition \ref{refinement}.
\end{proof}

\begin{proposition} \label{prop:KN}
For the complete graph $K_{N}$ on $N$ nodes. Given $a\neq b$ and $\beta=1$,  the epidemic hitting time  equals 
\begin{equation}\label{EHT:KN}
H_N(a,b)= \frac{\log N}{N}+o(1).
\end{equation}
Therefore, for an arbitrary graph $G$, the epidemic hitting time is bounded below by the expression in \eqref{EHT:KN}. 
\end{proposition}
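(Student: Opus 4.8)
The plan is to use Theorem~\ref{thm:variable-length} to replace the epidemic by its first-passage description, $\tau_{a,b}=\E[d(a,b)]$, and then exploit the vertex-transitivity of $K_N$ to collapse the computation to a single explicit sum. Equivalently, I would run the SI process directly on $K_N$ and track only the size of the infected set. When the infected set $I$ has $k$ nodes, every one of the $k$ infected vertices is joined to each of the $N-k$ susceptible ones, so $|\partial I| = k(N-k)$, and with $\beta = 1$ the interarrival time satisfies $T_{k}\sim\exponential\bigl((k-1)(N-k+1)\bigr)$. The decisive structural point is that this holding-time law depends only on $|I|$ and not on which nodes are infected.

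Next I would identify the ``rank'' $m$ of $b$, namely the value of $|I|$ at the instant $b$ becomes infected. By the symmetry of $K_N$ each susceptible node is equally likely to be infected next, so the order in which the $N-1$ nodes other than $a$ are infected is a uniformly random permutation; hence $m$ is uniform on $\{2,\dots,N\}$, and, being a function of the jump chain alone, it is independent of the sequence of holding times. Writing $T_{a,b}=Y_m=\sum_{k=2}^{m}T_k$ and averaging over both sources of randomness gives
\[
\tau_{a,b}=\frac{1}{N-1}\sum_{m=2}^{N}\sum_{k=2}^{m}\frac{1}{(k-1)(N-k+1)}.
\]

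The key step is then a Fubini swap: summing first over $m$, each index $k$ contributes $N-k+1$ identical terms, and this factor cancels the $(N-k+1)$ in the denominator, collapsing the double sum to $\sum_{k=2}^{N}\frac{1}{k-1}=H_{N-1}$, the $(N-1)$-st harmonic number. This yields the clean closed form $\tau_{a,b}=H_{N-1}/(N-1)$. Inserting $H_{N-1}=\log N+\gamma+o(1)$ and expanding $\tfrac{1}{N-1}=\tfrac{1}{N}\bigl(1+o(1)\bigr)$ gives $\tau_{a,b}=\frac{\log N}{N}+o(1)$, as claimed. I expect the main obstacle to be stating cleanly the independence of the infection order from the interarrival times and verifying the exact cancellation in the Fubini step; the harmonic asymptotics are then routine, and a small case such as $N=3$ (which gives $3/4$) provides a useful check.

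Finally, for the ``Therefore'' lower bound I would observe that for any graph $G$ on $N$ vertices, $K_N$ is a refinement of $G$, since it shares the same vertex set and contains a superset of the edges. Proposition~\ref{refinement} then yields $\tau_{a,b}(G)\ge\tau_{a,b}(K_N)=\frac{\log N}{N}+o(1)$, because completing a graph to $K_N$ can only shorten passage times.
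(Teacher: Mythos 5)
Your proof is correct and follows essentially the same route as the paper's: condition on the rank of $b$ among the infected nodes (uniform over the $N-1$ positions by symmetry), write the conditional expectation of the hitting time as the sum of the mean interarrival times $1/\bigl(j(N-j)\bigr)$, swap the order of summation so the double sum collapses to a harmonic number, obtaining $\tau_{a,b}=H_{N-1}/(N-1)\asymp \log N/N$, and then apply Proposition~\ref{refinement} to get the lower bound for general $G$. If anything, your argument is slightly tighter than the paper's: you justify explicitly that the rank of $b$ is a function of the jump chain alone and hence independent of the holding times (whose rates depend only on $|I|$), whereas the paper passes over this conditioning step informally (``should be equal to'').
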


\begin{proof}
To compute $\tau_{a,b}$, we condition on the event $T_{a,b}=Y_k$, meaning that $b$ is the $k$th infected node.
\[
\tau_{a,b}=\sum_{k=1}^{N-1} \bE\left[ T_{a,b} \mid T_{a,b}=Y_k \right]\bP(T_{a,b}=Y_k)
\]
Note that by symmetry
\begin{equation*}
\bP(T_{a,b}=Y_k) =\frac{1}{N-1}.
\end{equation*}

Now, if $T_{a,b}=Y_k$ that means that the first interarrival time $T_1$ is distributed like the minimum of $N-1$ exponential variables, $T_2$ like the minimum over $2(N-2)$ variables, etc... and so $ \bE\left[ T_{a,b} \mid T_{a,b}=Y_k \right]$ should be equal to:
\[
\frac{1}{(N-1)}+\frac{1}{2(N-2)}+\cdots+\frac{1}{k(N-k)}
\]
With this we get
\begin{align*}
\tau_{a,b} &=\frac{1}{N-1}\sum_{k=1}^{N-1}\sum_{j=1}^k \frac{1}{j(N-j)}\\
& = \frac{1}{N-1}\sum_{j=1}^{N-1}\sum_{k=j}^{N-1} \frac{1}{j(N-j)}\\
& = \frac{1}{N-1}\sum_{j=1}^{N-1}\frac{1}{j}\asymp
\frac{\log N}{N}
\end{align*}
\end{proof}

Combining our refinement observation (Proposition \ref{refinement}) with Propositions \ref{prop:Tree} and \ref{prop:KN}, we have the following corollary.
\begin{corollary}
Given a contact graph $G = (V,E)$ and two nodes $a,b$ in $V$. Then 
$$
H_{N}(a,b)\le \tau_{a,b} \le d_0(a,b).
$$
\end{corollary}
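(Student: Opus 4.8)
The plan is to derive each of the two inequalities separately by exhibiting an appropriate refinement relation and then invoking Proposition \ref{refinement} together with the two explicit computations already in hand (Propositions \ref{prop:Tree} and \ref{prop:KN}); throughout I would work with $\beta = 1$, matching the hypotheses of those propositions.

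For the upper bound $\tau_{a,b} \le d_0(a,b)$, I would first let $\gamma_{a,b}$ be a shortest (fewest-hops) path from $a$ to $b$ in $G$, so that its number of edges realizes $d_0(a,b)$. Regarded as a subgraph $P$ of $G$, this path is a tree whose vertex and edge sets are contained in those of $G$, so $G$ is a refinement of $P$ in the sense of the definition preceding Proposition \ref{refinement}. Proposition \ref{prop:Tree} gives $\tau_{a,b}(P) = d_0(a,b)$, since on a tree the epidemic hitting time coincides with the hop metric. Proposition \ref{refinement} then yields $\tau_{a,b}(G) \le \tau_{a,b}(P) = d_0(a,b)$, because passing to the refinement $G$ can only decrease the hitting time.

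For the lower bound $H_N(a,b) \le \tau_{a,b}$, I would compare $G$ against the complete graph $K_N$ on the same vertex set $V$, where $N = |V|$. Since every edge of $G$ is also an edge of $K_N$ and the two graphs share the vertex set $V$, the complete graph $K_N$ is a refinement of $G$. Proposition \ref{refinement} then gives $\tau_{a,b}(G) \ge \tau_{a,b}(K_N)$, and Proposition \ref{prop:KN} identifies $\tau_{a,b}(K_N)$ with $H_N(a,b)$. Combining this with the previous paragraph produces the stated sandwich $H_N(a,b) \le \tau_{a,b} \le d_0(a,b)$.

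The argument is short once the right refinements are chosen, so the only real care needed is in matching the definition of refinement: for the lower bound one must check that $K_N$ is built on exactly the vertex set $V$, so that $V \subset V(K_N)$ and $E \subset E(K_N)$ hold simultaneously, and for the upper bound that the realizing shortest path is genuinely a subgraph of $G$ rather than merely a walk. Neither point is difficult, but the step I would double-check is the direction of Proposition \ref{refinement} — namely that the finer graph has the \emph{smaller} hitting time — since reversing it would invert both bounds and give a false statement.
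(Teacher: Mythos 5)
Your proposal is correct and follows exactly the paper's route: the paper likewise obtains the corollary by combining Proposition \ref{refinement} with Propositions \ref{prop:Tree} and \ref{prop:KN} (using a shortest path for the upper bound and $K_N$ for the lower bound). In fact, your write-up spells out the refinement directions and the identification $\tau_{a,b}(P)=d_0(a,b)$ more explicitly than the paper, which states the combination without further detail.
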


A different lower bound was proved by Lyons, Pemantle and Peres in \cite{lyons1999resistance}.
\begin{theorem}[\cite{lyons1999resistance}]
Let $G=(V,E)$ be a contact network with variable i.i.d. edge-lengths $X_{e} \sim \exponential(1)$ for each $e \in E$. Let $a,b$ be two nodes in $V$. Then
\[
\tau_{a,b}=\E\left[d(a,b)\right]\geq \cR(a,b)
\]
where $\cR(a,b)$ is the effective resistance between $a$ and $b$ when  $G$ is an electrical network with edge-conductances equal to $1$ Ohm.
\end{theorem}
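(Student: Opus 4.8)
The plan is to realize the expected first-passage time as the absorption time of the continuous-time SI epidemic $I_t = Z_t$ (which is a Markov process by Theorem \ref{thm:variable-length}, with $\beta=1$ since $X_e\sim\exponential(1)$), started from $I_0=\{a\}$ and absorbed at $\sigma:=T_{a,b}=\inf\{t:b\in I_t\}$, so that $\tau_{a,b}=\bE[\sigma]$. Assume $G$ connected (otherwise $\cR(a,b)=\tau_{a,b}=\infty$). The central idea is to use the \emph{electrical potential of the infected set} as a Lyapunov function: for a subset $I$ with $a\in I$, let $\Phi(I):=\cR(I,b)$ denote the effective resistance (unit conductances) between $b$ and the single node obtained by contracting all of $I$ to one terminal $s$. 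By Rayleigh monotonicity $\Phi$ is nonincreasing as $I$ grows, $\Phi(\{a\})=\cR(a,b)$, and $\Phi(I_\sigma)=0$ because $b$ has just been shorted into the source. I would then show that $M_t:=\Phi(I_{t\wedge\sigma})+(t\wedge\sigma)$ is a submartingale and apply Dynkin's formula: letting $t\to\infty$ and using $\Phi(I_\sigma)=0$, boundedness of $\Phi$, and $\bE[\sigma]<\infty$ on a finite graph gives $\bE[\sigma]\ge \Phi(\{a\})=\cR(a,b)$, which is exactly the theorem.

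The submartingale property reduces, via the generator of the jump process (each boundary edge fires at rate $1$, and firing an edge to a neighbor $w$ of $I$ adds $w$, with multiplicity $\partial_w I$ = number of edges from $I$ to $w$), to the purely electrical inequality
\begin{equation*}
\sum_{w}\partial_w I\,\bigl[\cR(I,b)-\cR(I\cup\{w\},b)\bigr]\le 1,
\end{equation*}
i.e. the rate at which the epidemic lowers the resistance-to-$b$ never exceeds $1$. Contracting $I$ to a source $s$ and writing $v$ for the voltage of the unit current flow from $s$ to $b$ normalized by $v(b)=0$ (so $v(s)=\cR(s,b)=:R$), Kirchhoff's current law at $s$ gives the exact identity $\sum_w \partial_w(R-v(w))=1$. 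Hence it suffices to prove the per-neighbor bound $\cR(\{s,w\},b)\ge v(w)$, after which the displayed inequality follows by summing $R-\cR(\{s,w\},b)\le R-v(w)$ against the weights $\partial_w$.

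The key lemma $\cR(\{s,w\},b)\ge v(w)$ I would prove by collapsing the whole network onto the three terminals $\{s,w,b\}$: eliminating all interior nodes (Schur complement of the Laplacian, equivalently repeated star--mesh reduction) produces an equivalent triangle with nonnegative conductances $\alpha=c_{sw}$, $\beta=c_{sb}$, $\gamma=c_{wb}$, and this reduction preserves $R$, the voltage $v(w)$, and $\cR(\{s,w\},b)$ since all injections, measurements, and the shorting of $s$ with $w$ take place at the terminals. A direct computation in the triangle yields $v(w)=\alpha/(\alpha\beta+\beta\gamma+\alpha\gamma)$ and $\cR(\{s,w\},b)=1/(\beta+\gamma)$, so the lemma becomes $\alpha\beta+\beta\gamma+\alpha\gamma\ge\alpha(\beta+\gamma)$, that is $\beta\gamma\ge0$, which is immediate (with equality exactly in the degenerate series/parallel cases, consistent with the equality $\tau_{a,b}=\cR(a,b)$ observed for single edges, series, and parallel edges). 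The main obstacle is this electrical inequality: identifying $\Phi(I)=\cR(I,b)$ as the correct potential and recognizing the generator condition as the ``resistance decreases at rate at most one'' statement, together with the two technical points that the terminal reduction yields genuinely nonnegative conductances and that $\Phi$ is bounded with $\bE[\sigma]<\infty$ so that Dynkin's formula applies.
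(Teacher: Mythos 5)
The paper offers no proof of this statement at all --- it is imported wholesale from Lyons, Pemantle, and Peres \cite{lyons1999resistance} --- so your argument must be judged on its own, and on its own it is correct and complete. The reduction of the theorem to the generator inequality $\sum_{w}\partial_w I\,\bigl[\cR(I,b)-\cR(I\cup\{w\},b)\bigr]\le 1$ is valid: the infected-set process is a finite-state Markov chain whose generator is exactly the jump sum you write down, $\Phi(I)=\cR(I,b)$ is bounded, $\Phi(I_\sigma)=0$, and $\E[\sigma]\le N-1<\infty$, so Dynkin's formula together with monotone/dominated convergence yields $\E[\sigma]\ge\Phi(\{a\})=\cR(a,b)$ once the generator bound is established. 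Both halves of the electrical step check out. Kirchhoff's current law at the contracted source is exactly $\sum_w \partial_w I\,(R-v(w))=1$, and the key lemma $\cR(\{s,w\},b)\ge v(w)$ follows from the three-terminal reduction as you describe: the Schur complement of the Laplacian onto $\{s,w,b\}$ is again a graph Laplacian with nonnegative conductances $\alpha,\beta,\gamma$, it preserves terminal voltages, terminal resistances, and resistances after shorting terminals, and the triangle formulas $v(w)=\alpha/(\alpha\beta+\beta\gamma+\gamma\alpha)$, $\cR(\{s,w\},b)=1/(\beta+\gamma)$ reduce the lemma to $\beta\gamma\ge 0$. Degenerate denominators cannot occur, since $\alpha\ge\partial_w I\ge 1$ (eliminating interior nodes can only increase a terminal-to-terminal conductance) and $\beta+\gamma>0$ by connectivity. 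Structurally this is the first-passage-percolation argument in the spirit of \cite{lyons1999resistance} --- grow the cluster and show the shorted effective resistance drops at expected rate at most one per unit time --- so what your write-up buys, relative to the paper's bare citation, is a self-contained and essentially elementary proof whose only nontrivial electrical input is the three-terminal computation; it also makes transparent the equality cases (series/parallel, hence trees), matching Proposition \ref{prop:Tree}. Two cosmetic remarks: the appeal to Rayleigh monotonicity of $\Phi$ is never actually needed, and the generator bound is only required (and only proved) at states with $a\in I$, $b\notin I$, which is indeed all that the stopped process visits.
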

 Note that for a complete graph $\cR(a,b)$ is smaller than $H_N(a,b)$, while for a path graph it is larger. In general, the relation between these two bounds depends on the structure of the graph.

\subsection{The ``Dijkstra on the fly'' algorithm}
When computing the shortest-path distance $d(a,b)$, one may implement Dijkstra's algorithm. However, instead of determining all the edge-lengths $X_e$ at the beginning, one may compute the length of an edge as needed. So to obtain the
 ``Dijkstra on the fly'' algorithm from Dijsktra's algorithm it is enough to insert line 8. The advantage is that only the edges in the ball induced by $\{x\in V: d(x,a)\leq d(b,a)\}$ are needed.

\begin{algorithm}[h!]
	\caption{Dijkstra on the fly algorithm}
	\label{dijkstra-onfly}
	\begin{algorithmic}[1]
		\REQUIRE $(G, s)$
		\STATE $W \gets \emptyset$
		\STATE $\delta(s) \gets 0$
		\STATE $\delta(u) \gets \infty$ and $\pred(u)\gets \emptyset$ for all $u \in V\setminus\{s\}$.
		\WHILE{There exists $u \not\in W$ satisfying $\delta(u) < \infty$}
			\STATE  $u \gets \argmin \{ \delta(u) : u \not\in W\}$
			\STATE  $W \gets W \cup \{u \}$
                   \FOR{ Each $v \not\in W$ that neighbors $u$}
					\STATE $\ell(u,v)\gets\exponential(\beta)$
					\IF{ $\delta(u) + \ell(u,v) < \delta(v)$}
						\STATE $\delta(v) = \delta(u) + \ell(u,v)$
						\STATE $\pred(v) \gets u$.
					\ENDIF
			\ENDFOR
		\ENDWHILE
	\end{algorithmic}
\end{algorithm}

\section{Metrics and Centralities} \label{MTC} 

A {\it graph centrality} is a function $P : V \to \R$ that assigns a value to each node. This value represents how important a node is with respect to the graph topology. It is often convenient to normalize a centrality so that it can be thought of as a probability distribution.

A {\it graph metric} is a symmetric, non-degenerate function $d: V \times V \to [0,\infty]$ that satisfies the triangle inequality. Every graph metric induces a centrality denoted 
\begin{equation}\label{eq:pd}
P_{d}(v_i) := \frac{1}{N} \sum_{v_{j} \neq v_{i}} \frac{1}{d( v_{i}, v_{j})}.
\end{equation}

Heuristically, this induced centrality measures how close node $v_i$ is to the rest of the network. 

\subsection{Metrics}

\emph{Epidemic Hitting Time: } This is the expected time it takes for an infection starting at node $i$ to infect node $j$. Theorem \ref{thm:ehtmetric} establishes that the epidemic hitting time is actually a graph metric.
For numerical approximations, one can run an $SI$ simulation using the event-based algorithm described at the end of Section \ref{ssec:si}, or sample a variable-lengths instance and run Dijkstra. For single source/target computations the event-based approach is more economical based on the number of coin tosses. However, for all-pairs computations finding all Dijkstra shortest-paths for a given variable-length instance saves computational time.

\emph{Effective Resistance:}  Arises by considering a graph as an electrical circuit, where each edge has $1$ Ohm of resistance. The pairwise effective resistance, $\cR(i,j)$ turns out to be a graph metric, see \cite{KR93}, and can be computed using the pseudo-inverse of the combinatorial Laplacian:
\begin{equation}\label{eq:effres}  
\cR =  \diag(L^{\dag}) + \diag(L^{\dag})^{T} -  2 L^{\dag}
\end{equation}
Effective resistance has been shown to be closely related to many properties of random Markov processes \cite{AF02}, including escape probabilities and commuting times of random walks \cite{CRRST96}, as well as  recurrence/transience properties of random walks \cite{DS84}. 

\subsection{Centralities}
\emph{Spectral Centrality: } The spectral centrality ($S$) is based on the idea that the importance of a node depends on the importance of its neighbors.  Spectral centrality does well in characterizing simple dynamics like diffusion, and is the basis of the PageRank algorithm \cite{brin98}. It is defined as
$$
S(i) = v_i= \lambda^{-1} \sum_{j \sim i} v_{j},
$$
where $\lambda$ is the largest eigenvalue of the adjacency matrix and $v_j$ is the $j$th component of the corresponding eigenvector.

\emph{Degree Centrality: } The degree centrality ($D$) of a node is simply  the number of neighbors. Therefore it ignores the importance a node's neighbors and  will frequently rank the importance of a large number of nodes to be the same.

\emph{Betweenness Centrality: } The betweenness centrality ($B$) first introduced in \cite{bavelas48} and popularized by Freeman \cite{freeman77} is a method of measuring the importance of a node based on the fraction of shortest paths that it lies on. It measures the influence of a node over the spread of information through the network. It is defined as
$$
B(i) = N^{-1} \sum_{j \neq i \neq k} \frac{ \sigma_{jk}(i)}{\sigma_{jk}},
$$
where $\sigma_{jk}$ denotes the number of shortest paths between $j$ and $k$ and $\sigma_{jk}(i)$ is the number of shortest paths from $j$ to $k$ that visit node $i$. Betweenness centrality is good for determining bottlenecks.

\emph{Communicability Centrality: } The communicability centrality ($C$) is an adaptation of Freeman's betweenness centrality that takes into account all independent walks between two nodes, instead of just the geodesic paths. The pairwise communicability can be computed from the spectrum of the adjacency matrix \cite{estrada2008communicability} as 
$$
C(i) = \sum_{j=1}^{N} \left(v_{j,i}\right)^{2} e^{\lambda_{j}},
$$
where $\lambda_1 \ge \lambda_2 \ge ... \ge \lambda_N$ are the eigenvalues of the adjacency matrix, and $v_{k,\ell}$ is the $\ell$th element of the eigenvector corresponding to the $k$th eigenvector.

\section{Real-World Networks}
In our simulations, we have used several real-world networks, available for download at \url{http://ece.k-state.edu/netse/projects/sprojects/proj2-products.html}

\begin{figure}[h]
{\footnotesize
{\setlength{\tabcolsep}{5pt}
\centering
\begin{tabular}{| l | l | l | l | l| l | l | l | l| l| l| l|}   

\hline
{\footnotesize Network} & {\footnotesize Number}& {\footnotesize Number } & {\footnotesize Average  } & {\footnotesize Average} & {\footnotesize Diam-}\\ 

\ & \footnotesize of & \footnotesize of & \footnotesize Node & \footnotesize Clustering & eter \\
\ & \footnotesize Nodes &  \footnotesize Edges &\footnotesize  Degree & \footnotesize Coefficient & \\  \hline

Adjnoun \cite{newman2006finding} & 112 & 425 & 7.589 & 0.1569 & 5  \\ \hline
Dolphins \cite{lusseau2003bottlenose} &62&	159	&5.129	&	0.309&	8	 \\ \hline
Facebook  \cite{leskovec2012learning} &2106 &2915  &2.768&0.117&12 \\ \hline
Football \cite{girvan2002community} & 115&	615	&10.695	&0.407	&4	\\ \hline
GRQC  \cite{leskovec2007graph} &4158&13422&6.456&0.557&17 \\ \hline
Hep  \cite{newman2001structure} & 8361	&15751	&3.768		&0.329&	19	\\ \hline
Karate \cite{zachary1977information}& 34	&78&	4.588&	0.256	& 5\\ \hline
Lesmis \cite{knuth1993stanford} & 77& 	254	& 6.597&	0.499&	5 \\ \hline 
Netscience \cite{newman2006finding} & 1589&	2742	&3.451	&0.693	&17\\ \hline
Polblogs \cite{adamic2005political} & 1490& 	19090	&25.624&	0.226	&9 \\ \hline
Power \cite{watts1998collective} & 4941 &	6594	&2.669 &	0.103	&46 \\ \hline
\end{tabular}}}
\caption{Summary of Network Structure Properties}
\end{figure}

\section{Numerical Methods and Results}
For each network described above, the epidemic hitting time was approximated by averaging the results of 100 SI simulations. Then a corresponding centrality was created using (\ref{eq:pd}). These results were compared to the centrality obtained from  effective resistance  as well as the spectral, degree, betweenness, and communicability centralities. Comparisons were done in two ways.

\emph{Total variation distance of probability measures: } One comparison was to compute the total variation between two centralities after normalizing the centralities so that they can be interpreted as a probability measure on the nodes. Given two centralities $P$ and $Q$ this amounts to computing (see \cite{levin2009markov}),
$$
\frac{1}{2} \sum_{v \in V} \left| P(v) - Q(v) \right|.
$$

In particular, the total variation distance of probability measures is a value between zero and two, where zero means the distributions overlap completely, and two means they are mutually singular.

\emph{Spearman's Rank Correlation Coefficient: } The Spearman Rank coefficient takes values between negative one and one and should be interpreted like the Pearson correlation coefficient \cite{lehman2005jmp}.  
Below are the results of these comparisons.
\begin{figure}[h!] 
{\footnotesize
{\setlength{\tabcolsep}{9pt}
\centering
		\begin{tabular}{| l | l | l | l | l | l| l |}
			\hline
			{\bf Network} & {\bf ER }&  {\bf BC }& {\bf DC } & {\bf CC}  & {\bf SC }\\ \hline
			Adjoun & {\bf 0.9918} & 0.8514	 &0.9814 & 0.9732	& 0.9727 \\ \hline
			Dolphins & {\bf 0.9846} & 0.7483 & 0.8579 & 0.8533 & 0.8521 \\ \hline
			Facebook & {\bf 0.9603}	& 0.4960 & 0.6715 & 0.8559 & 0.8875 \\ \hline
			Football & {\bf 0.8725}	&0.6142	&0.6753&	0.3683	&0.6471 \\ \hline
			GRQC &  {\bf 0.9651} & 0.5181 & 0.7536 & 0.7969 & 0.7960 \\ \hline
			HEP & {\bf 0.9554}	& 0.5387& 	0.6956&	0.8834&	0.8976 \\ \hline
			Karate &  0.9125	& 0.7814 &	0.8540 &	0.9419 &	{\bf 0.9525} \\ \hline
			Lesmis & {\bf 0.9900} &	0.7076 &	0.9490 &	0.9643&	0.9288 \\ \hline
			Netscience & {\bf 0.9744} &	0.3800&	0.4543&	0.6389&	0.6583 \\ \hline
			Polblogs & {\bf 0.996}	&0.8824&	0.9933&	0.968&	0.9681 \\ \hline
			Power & {\bf 0.8277}	&0.3133	&0.3247	&0.4207	&0.5919 \\ \hline
		\end{tabular} }}
	\caption{A table of values of the Spearman Rank coefficient between each centrality and the epidemic hitting time centrality. We note that in every network except the Karate club
	network, the Effective Resistance outperforms every other centrality. We also remark that in \cite{ronqui2014analyzing}, there was a similar observation of the Karate club
	network not behaving like the other networks, likely due its being a small network with a few dominating nodes.}
\end{figure}

\begin{figure}[h!] 
{\footnotesize
{\setlength{\tabcolsep}{9pt}
\centering
	\begin{tabular}{| l | l | l | l | l | l| l |}
		\hline
		{\bf Network} & {\bf ER}&  {\bf BC }& {\bf DC } & {\bf CC}  & {\bf SC }\\ \hline
		Adjnoun & {\bf 0.0304} &	0.4834	& 0.2123 &	0.4377	& 0.2172 \\ \hline
		Dolphins &{\bf 0.0288}	& 0.4155 &	0.1631	& 0.3424 &	0.347 \\ \hline
		Facebook & {\bf 0.0308}	& 0.8277 &	0.4633 &	0.7698 &	0.5904 \\ \hline
		Football & {\bf 0.0080}	& 0.1701 &	0.0233&	0.1113 &	0.0681 \\ \hline
		GRQC &  {\bf 0.0485} & 0.6710 & 0.3498 & 0.9669 & 0.8935              \\ \hline
		Hep &  {\bf 0.0492} &	0.6296 &	0.2698 &	0.9633	 & 0.9635 \\ \hline
		Karate & {\bf 0.0259} &	0.5841 &	0.2181 &	0.3271 &	0.1617 \\ \hline
		Lesmis & {\bf 0.0296} &	0.6902 &	0.2363 &	0.5282 &	0.3490 \\ \hline
		Netscience & {\bf 0.0252}  &	0.7706 &	0.2452 &	0.6294	& 0.7331 \\ \hline
		Polblogs & {\bf 0.0214} &	0.6436 &	0.3734 &	0.6293 &	0.3875 \\ \hline
		Power &  {\bf 0.0433} &	0.7098 &	0.2220 & 	0.2972 &	0.9846  \\ \hline
	\end{tabular}}}
	\caption{A table of values of the total variation between each centrality and the epidemic hitting time centrality. We note that in 		every network, the Effective Resistance
	outperforms every other centrality. This is in large part due to the weight of the peripheral nodes. The effective resistance and 			epidemic hitting time assign a large enough
	weight to these peripheral nodes, while the other centralities underestimate their importance.} 
\end{figure}

\begin{figure*}
\hspace*{-.9in}
\includegraphics[trim={0 2.4in 0 0}, clip, width = 1.15\textwidth]{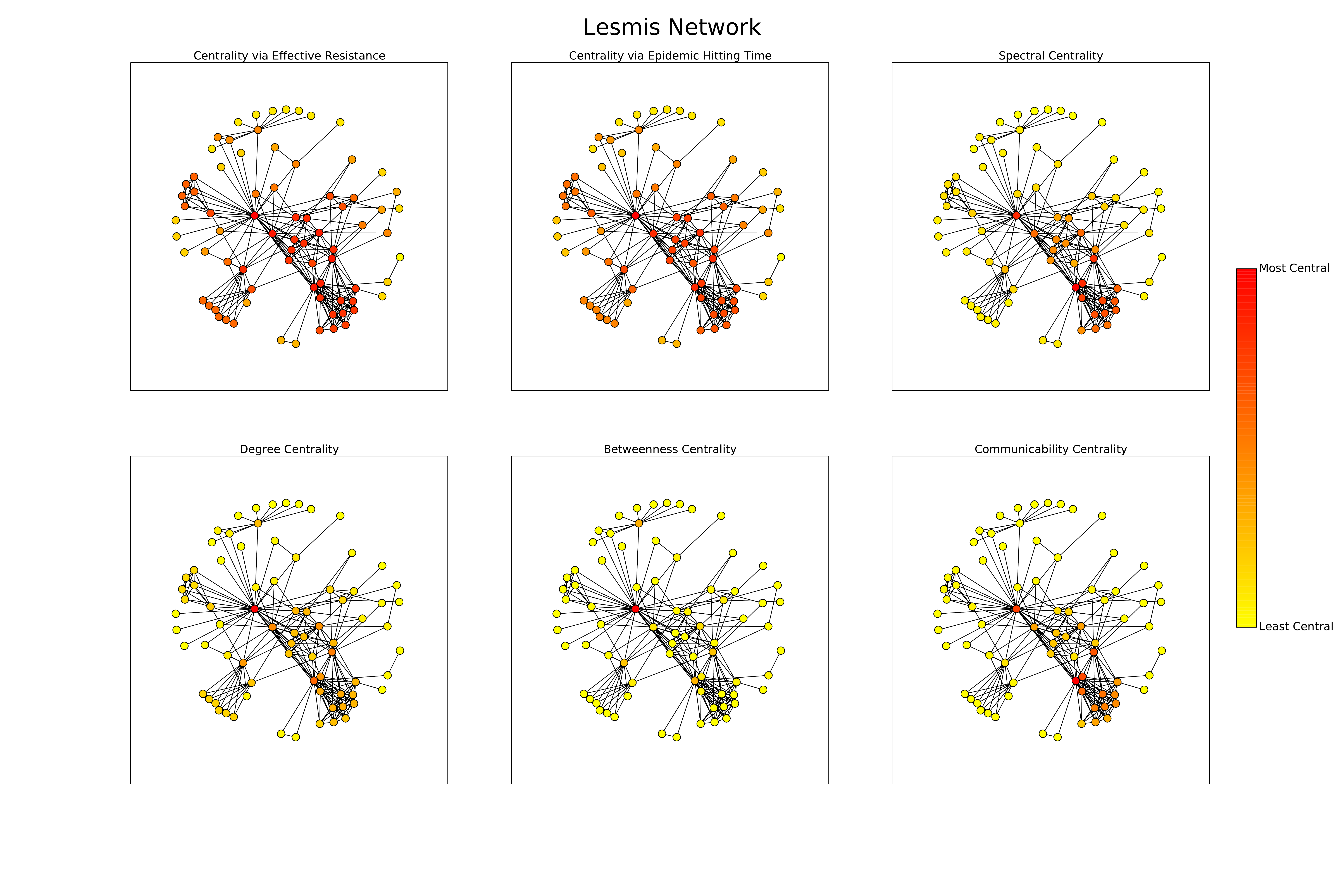}
\caption{Each figure shows the importance of the nodes in the Lesmis network, based off of a different centrality. We observe that not only do the effective resistance and epidemic hitting time centralities visually appear to match up quite well, they are also the only two centralities that assign relative importance to the peripheral nodes. See \cite{vsikic2013epidemic}.} 
\label{lesmis_heatmap}
\end{figure*}

This data indicates that effective resistance is a good measure of the influence that a node has over the spread of an epidemic throughout a contact network. However, the tables give little insight as to why effective resistance outperforms the other quantities. To gain intuition, we look at the heatmaps of the Lesmis network \cite{vsikic2013epidemic} in  Figure \ref{lesmis_heatmap} representing the importance of each node according to the respective graph quantity. We see that effective resistance and the epidemic hitting time are the only two graph quantities that assign relative importance to peripheral nodes of the network.

\section{Conclusion}
In this pape, we propose the epidemic hitting time (EHT) as a relevant metric on graphs to study epidemic processes. Epidemic hitting time between node $a$ and node $b$ measures the expected time it takes for an infection process starting at node $a$ in a fully susceptible network to propagate and reach node $b$. We develop the theory based on the susceptible-infected (SI) model, and show its equivalence with a variable link-length shortest-path model based on exponentially distributed random edge weights. As a result, we develop several efficient numerical methods for the EHT metric computation. Then, we conduct an exhaustive numerical experiment where we compute node centralities based on the EHT metric and compare the resulting ranking with the ones obtained using several other common node centralities including node degree, spectral, betweeness, communicability, and effective resistance centrality. We observe two surprising findings: first, EHT centrality is highly correlated with effective resistance centrality; second, EHT highlights the role of peripheral nodes in epidemic spreading unlike most other common centrality measures.

The variable-lengths model is useful to establish various properties of epidemic hitting time, but it also offers a partial explanation as to why our numerics show effective resistance as being closely correlated to the epidemic hitting time. In the work of Lyons, Pemantle, and Peres \cite{lyons1999resistance}, it was shown that the expected shortest path on a graph with i.i.d. exponentially distributed random lengths of mean one is bounded below by the effective resistance on the graph with unit resistance. In particular, this shows that the epidemic hitting time is bounded below by the effective resistance. This result confirms the recent finding of other authors (for example Sikic et al. \cite{vsikic2013epidemic}) that the impact of peripheral nodes is typically underestimated in epidemic models.
In summary, 
the computation of EHT-based node centrality provides novel 
information on the infection process on a graph, complementing the 
knowledge provided by common metrics.

% conference papers do not normally have an appendix

%% use section* for acknowledgment
%\section*{Acknowledgment}
%
%
%The authors would like to thank...

% trigger a \newpage just before the given reference
% number - used to balance the columns on the last page
% adjust value as needed - may need to be readjusted if
% the document is modified later
%\IEEEtriggeratref{8}
% The "triggered" command can be changed if desired:
%\IEEEtriggercmd{\enlargethispage{-5in}}

% references section

% can use a bibliography generated by BibTeX as a .bbl file
% BibTeX documentation can be easily obtained at:
% http://mirror.ctan.org/biblio/bibtex/contrib/doc/
% The IEEEtran BibTeX style support page is at:
% http://www.michaelshell.org/tex/ieeetran/bibtex/
\bibliographystyle{IEEEtran}
% argument is your BibTeX string definitions and bibliography database(s)
%\bibliography{references.bib}
%
% <OR> manually copy in the resultant .bbl file
% set second argument of \begin to the number of references
% (used to reserve space for the reference number labels box)

% that's all folks
\end{document}